\newtheorem{theorem}{Theorem}
\newtheorem{example}{Example}[section]
\begin{document}

\begin{frontmatter}



\title{TS-Cabinet: Hierarchical Storage for Cloud-Edge-End Time-series Database}

\cortext[cor1]{ Corresponding author at: Harbin Institute of Technology, No. 92, West Dazhi Street, Harbin, China.}
\author[label1]{Shuangshuang Cui}
\ead{cuishuangs@stu.hit.edu.cn}

\author{Hongzhi Wang\corref{cor1}\fnref{label1,label2}}
\ead{wangzh@hit.edu.cn}

\author[label1]{Xianglong Liu}
\ead{1190200525@stu.hit.edu.cn}

\author[label1]{Zeyu Tian}
\ead{2021110003@stu.hit.edu.cn}

\author[label1]{Xiaoou Ding}
\ead{dingxiaoou_hit@163.com}

\affiliation[label1]{organization={Harbin Institute of Technology},
	addressline={No. 92, West Dazhi Street},
	city={Harbin},
	postcode={150000},
	country={China}}

\affiliation[label2]{organization={Peng Cheng Laboratory (PCL)},
	addressline={No. 2, Xingke 1st Street, Nanshan},
	city={Shenzhen},
	country={China}}

\begin{abstract}
Hierarchical data storage is crucial for cloud-edge-end time-series database. Efficient hierarchical storage will directly reduce the storage space of local databases at each side and improve the access hit rate of data. However, no effective hierarchical data management strategy for cloud-edge-end time-series database has been proposed. To solve this problem, this paper proposes TS-Cabinet, a hierarchical storage scheduler for cloud-edge-end time-series database based on workload forecasting. To the best of our knowledge, it is the first work for hierarchical storage of cloud-edge-end time-series database. By building a temperature model, we calculate the current temperature for the time-series data, and use the workload forecasting model to predict the data's future temperature. Finally, we perform hierarchical storage according to the data migration policy. We validate it on a public dataset, and the experimental results show that our method can achieve about 94\% hit rate for data access on the cloud side and edge side, which is 12\% better than the existing methods. TS-Cabinet can help cloud-edge-end time-series database avoid the storage overhead caused by storing the full amount of data at all three sides, and greatly reduce the data transfer overhead between each side when collaborative query processing.
\end{abstract}



\begin{keyword}
Cloud-edge-end time-series database  \sep  Data Temperature \sep Hierarchical Storage \sep Workload Forecasting.



\end{keyword}

\end{frontmatter}


\section{Introduction}
At present, many industries are in urgent need of cloud-edge-end time-series database, such as intelligent transportation, intelligent manufacturing, etc. In a cloud-edge-end time-series database, the data are stored in the local database of the cloud computing center, edge devices, and end devices respectively. The end devices are responsible for data collection and generation, and then passed to the edge devices. The edge devices perform the preliminary calculations on the data and upload it to the data center. The data in cloud-edge-end scenarios come from diverse and large data sources. They contain real-time data acquired by a variety of sensors, mostly time-series data. It is a series of numerical data points collected over time. Each data point is collected at discrete time intervals. When data generation, each data point has a timestamp. 

The time series are typically stored at the cloud-edge-end according to the following strategy. The edge and end devices store the collected data or historical data for a long period time and periodically upload to the cloud side by the end devices. \textbf{If the data is accessed frequently, it means its temperature is high.} We call it hot data. On the contrary, we call it cold data. If an end device uploads all the data to the cloud and the edge devices, not only it causes a great waste of storage space, but also the query processing efficiency may be decreased by the frequent communication among the cloud-edge-end. Therefore, to reduce the storage space occupation and improve the query efficiency, the cloud-edge-end time-series database urgently needs an efficient hierarchical storage strategy to distribute the data among cloud-edge-end properly.

Hierarchical storage of time-series data in the cloud-edge-end time-series database brings three challenges as follows. (1) Firstly, how to determine the data to store at each side according to the features of devices, data, and workload. (2) The data access interval affects the data warming amplitude. Compared with traditional relational data, the query of time-series data usually appears periodically. To accurately calculate the data temperature, it is necessary to quantitatively represent the relationship between the access interval and the data temperature change in the temperature model. (3) Compared with traditional relational data queries, most temporal data queries to access data in batches by periods, and the queries are closely linked with timestamp attributes. To efficiently perform hierarchical storage scheduling of time-series data, it is necessary to predict the data temperature's rise and fall in the future.

Hierarchical data storage strategies have been well studied, but unsuitable for cloud-edge-end. To classify the hot and cold data, the cache replacement strategies represented by LRU and LFU use timeliness (the most recently accessed data is hot data) and access frequency (the data with the highest frequency in historical data is hot data). Specifically, \cite{chong} consider the timeliness and frequency of data access to classify the hot and cold data. In \cite{hash} and \cite{bloom}, hash-based and Bloom filter-based approaches are used to build a framework for identifying hot and cold data. Siberia proposed by Microsoft predicts the likelihood of data being accessed in the future by exponential smoothing to maximize the main memory hit rate~\cite{Siberia}. However, these methods can only portray the relative temperature between data and cannot accurately reflect the hot and cold degree of data. 

To describe the temperature of the data quantitatively, \cite{xie} and \cite{xu} modeled the temperature of data based on Newton's law, but such a model fails to describe the relationship between data temperature and access interval. In terms of hierarchical storage strategies, \cite{xie} proposed a temperature-model-based cache replacement strategy for hot and cold data migration, while \cite{xu} considered a high and low water level method for saturation monitoring of hot databases that can dynamically adjust the high and low water level thresholds. Gorilla stores the temporal data collected from the database in the last 26 hours in the hot storage media~\cite{Gorilla}. These methods are superior to indirect methods of measuring data temperature, but cannot predict the temperature of the data in future. 

From the above discussions, existing hierarchical storage methods cannot be directly applied in the cloud-edge-end time series database, since the temperature not only model ignores the relationship between data temperature and access interval, but also cannot predict the temperature change of data in future.

To address the above problems and realize efficient hierarchical storage for data in the cloud-edge-end time-series database, this paper proposes TS-Cabinet achieve the goal of minimizing storage space and maximizing data query efficiency with temperature modeling and workload prediction. To portray the relationship between data temperature and access interval, we first model data temperature based on Newton's law of cooling and the law of thermal radiation. The data temperature model is the foundation of hierarchical storage, and we can calculate the temperature of the data after workload prediction. We establish a query arrival rate prediction model with the proposed frequent access timestamp mining algorithm, which can predict the data temperature in the future. Such that we can calculate the temperature reached by the data in the future. Based on the predicted data temperature, we propose a data migration strategy which can select the best location for storage. With the known frequency of access to the data in the future, TS-Cabinet can improve the data access hit rate and accelerate the query.

The contributions of this paper are as follows.

(1) We propose TS-Cabinet to solve the difficulty of storing data hierarchically in the cloud-edge-end time-series database and propose a data migration strategy that can store data in the optimal location in the cloud-edge-end database.

(2) To reflect the relationship between data temperature and access interval, we propose a data temperature model based on the combination of Newton's law of cooling and the law of thermal radiation, and propose a data temperature update algorithm.

(3) To predict the temperature change of time-series data in the future, we propose a workload prediction method based on query logs and frequent access timestamp mining algorithm.

(4) We tested the performance of TS-Cabinet on a large-scale real dataset and showed that our method improved the hot data access rate by 12\% compared to other hierarchical storage methods.

The structure of this paper is as follows: Section 2 introduces the related work; Section 3 detailed the overall framework design of TS-Cabinet; Section 4 introduces the division method of hot and cold data; Section 5 introduces the workload prediction techniques; Section 6 conducts experiments and analyzes the experimental results; the last part concludes and discusses the future work.

\section{Related Work}

We classify the related work into cloud-edge-end database, hot and cold data hierarchical storage, and workload prediction to compare TS-Cabinet.

\textbf{Cloud-edge-end Storage.} Currently, the storage strategies for cloud-edge-end data are mainly divided into two categories: the centralized ones and the distributed ones. ~\cite{PB,OD,X} store the data obtained from sensor nodes in a centralized database. In comparison, the distributed-based method \cite{Sensor} can achieve better workload balancing than the centralized manners, but it is difficult to guarantee the consistency of the metadata due to the occurrence of link and node failures. There are already many database products supporting cloud-edge-end systems, but the current storage methods are mainly based on single cloud-side, edge-side or end-side storage. For example, OpenTSDB \cite{AS} only supports cloud-side data storage. It is worth noting that although Apache IoTDB ~\cite{IOT}, TDengine \cite{tao} have the ability to deal with the storage requirements of cloud-side and end-side data at the same time, they cannot meet the demand of collaborative processing.

\textbf{Hierarchical Storage of Hot and Cold Data.} The first step towards hierarchical storage of hot-cold data is to find an effective way to distinguish the hot-cold data. LRU and LFU, the representation of the classical cache replacement strategies, identify hot and cold data from the perspectives of timeliness and data access frequency; to maximize the main memory hit rate. Leveraging Newton's law of cooling as a tool to quantitatively describe two behaviors: data temperature decay with time and access temperature rising, ~\cite{xie,xu} explicitly build a data temperature model. In terms of a tiered storage strategy, \cite{xie} proposes a temperature model-based cache replacement strategy to realize hot-cold data migration. Faced with the need for hot-cold data identification and hierarchical storage on a temporal database, Gorilla \cite{Gorilla} takes the last 26 hours of temporal data collected from monitor devices as hot data and saves in memory.

\textbf{Workload Forecasting.} For workload prediction tasks, \cite{CIKM} proposes to identify and forecast significant workload migrations depending on SQL-based workloads monitoring with Markov chains, neural networks, and fuzzy logic. QueryBot5000 \cite{QB} uses an online dynamic clustering approach to process SQL queries, and employs prediction models for each cluster. \cite{RNN} put forward an idea about using an RNN to predict future workloads of DaaS, while \cite{TEALED} leverage an online, multi-step workload prediction method based on the auto-encoder framework to forecast the resource utilization and query arrival rate of DBMS.

\section{Overview}
In this section, we overview TS-Cabinet as a hierarchical storage scheduler for cloud-edge-side time-series database. For efficient hierarchical storage of data in the cloud-edge-end time-series database, TS-Cabinet calculates the current temperature for time-series data and also predicate the future temperature for the data, and finally migrate the data according to both the current and future temperature. The architecture of the TS-Cabinet is shown in Figure 1, which consists of three main modules: hot-cold data classification, workload forecasting and data migration.

\begin{figure}[t]
	\centering
	\setlength{\abovecaptionskip}{0cm} 
	\setlength{\belowcaptionskip}{0cm}
	\includegraphics[width=0.9\textwidth]{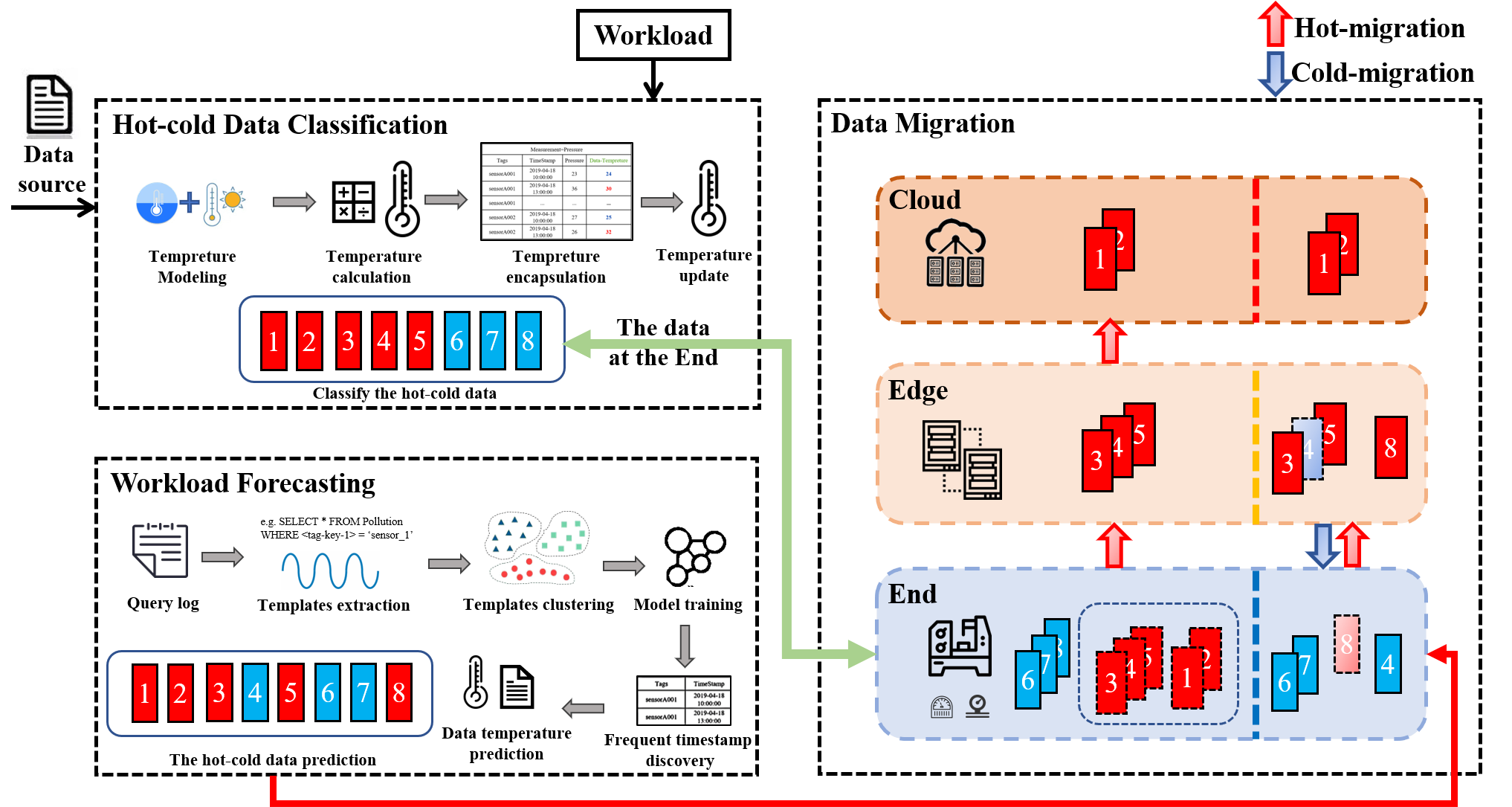}
	\caption{The Architecture of TS-Cabinet.} \label{fig1}
\end{figure}

To address the relationship between data temperature and access interval is difficult to be reflected in the temperature model, the first module of TS-Cabinet constructs a temperature model based on Newton's law of cooling and the law of thermal radiation to portray it to divide the data into hot and cold one. This module will be introduced in detail in Section 4. To predict the temperature change of time-series data in the future, workload forecasting module forecasts the data temperature in the next period time by building a query history arrival rate prediction model and a frequent access timestamp mining algorithm. This module will be described in detail in Section 5. 

Based on the hot-cold data classification and workload forecasting, the data migration module can select the best location for data to store. In this module, data migration mainly happens between cloud, edge, and end devices. Since computation is usually performed at the edge and the cloud devices, TS-Cabinet stores hot data in advance in a hierarchical manner at the edge and the cloud, which can reduce the size of the transferred data. TS-Cabinet stores the cold data at the end devices and creates a data summary  for over-cooled data to minimize storage space of the end. Workload prediction can help us to predict the temperature of data in the next time period in advance. If the temperature is too high, the data will be preheated to the edge and cloud devices in advance. If the temperature decreases, proving that it has a low probability of being accessed in the future, the data remains stored at the end device.

\section{Hot-cold Data Classification}

The hot and cold data classification is the base for hierarchical storage in the cloud-edge-end time-series database. In order to represent the relationship between data temperature and access interval, we model the temperature based on Newton's law of cooling and the law of thermal radiation. First, we bring the metrics of timeliness and frequency of data access for dividing hot and cold data into the model, i.e., data write time, frequency of data being accessed, etc. On this basis, we use timestamp to represent the data write time to reduce the storage overhead. Second, we represent the relationship between access interval and data temperature with the law of thermal radiation. Third, to reduce storage overhead further, the data temperature is encapsulated with the original data and stored together. In sum, the data hierarchical storage can facilitate access to the temperature of the data at the current moment.

Next, we describe the temperature model for time-series data in detail.  Newton's law of cooling describes an object, when its temperature is higher than the ambient temperature, will discharge Arnie's heat to the surrounding environment and gradually cool down, and the cooling rate is proportional to the temperature difference between the object and the ambient temperature. So we construct the temperature model based on Newton's law of cooling for the temperature decay part. On this basis, we introduce a ``heat source" to portray the heating process brought about by accessing data, which is equivalent to adding a heat source to the data. When the data is written, its temperature is computed according to Theorem 1 if it is accessed. 
\begin{theorem}
	The $data_i$ accessed $s$ times in $(t_n - t_{n-1})$ , the temperature of $data_i$ is $T_i(t_n)$.
	\begin{equation}
	T_i(t_n) = T(t_{n-1})e^{-k(t_n - t_{n-1})} + \gamma \times {T_{heat}}^4 \times s /  (t_n - t_{n-1})
	\end{equation}
	where $T_{heat}$ is temperature of the heat source, $k$ denotes the cooling rate and $\gamma$ denotes the heating rate.
\end{theorem}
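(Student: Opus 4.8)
The plan is to derive the formula by writing the temperature at $t_n$ as the superposition of two physically motivated components: a passive cooling term governed by Newton's law of cooling acting over the idle interval, and an active heating term governed by the law of thermal radiation applied to the ``heat source'' attached to each data access. The derivation is essentially a construction that makes these two modeling assumptions precise and then combines them.

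First I would handle the cooling part. Newton's law of cooling says the rate of temperature change is proportional to the gap between the object's temperature and the ambient temperature, $dT/dt = -k(T - T_{env})$. Taking the ambient temperature as the reference level $T_{env}=0$ (so that ``temperature'' measures the excess warmth of the data above a globally cold baseline), this ODE integrates over $(t_{n-1}, t_n)$ to give $T(t_{n-1})e^{-k(t_n-t_{n-1})}$, which is exactly the first term of the claimed expression. Here $k$ is the cooling rate appearing in the statement.

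Next I would model the heating contribution of the $s$ accesses occurring in $(t_{n-1}, t_n)$. Each access is treated as injecting heat into $data_i$ from a source held at temperature $T_{heat}$; by the Stefan--Boltzmann law of thermal radiation the energy delivered by one access is proportional to $T_{heat}^4$, with proportionality constant $\gamma$, the heating rate. Summing over the $s$ accesses gives a total injected quantity $\gamma \, T_{heat}^4 \, s$, and to turn this accumulated energy into a temperature increment compatible with the continuous (rate-based) model I would normalize by the length of the interval, obtaining $\gamma \, T_{heat}^4 \, s / (t_n - t_{n-1})$. Adding this to the decayed term yields the formula of Theorem 1.

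The main obstacle is justifying the decoupling of the two effects and the normalization by $t_n - t_{n-1}$: strictly, the accesses happen at interior instants $\tau_1,\dots,\tau_s$ and each injected packet of heat should itself cool according to Newton's law before $t_n$, so a fully faithful account would be $\sum_{j} \gamma \, T_{heat}^4 \, e^{-k(t_n-\tau_j)}$. The stated formula is the simplification obtained by assuming the accesses are spread roughly uniformly over the interval, so that the sum behaves like $s$ times an interval-averaged contribution, and by absorbing the residual decay factor into the interval-averaged heating rate. I would make these two modeling assumptions explicit; once they are granted, the remaining steps are the routine integration and arithmetic above.
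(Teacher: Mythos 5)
Your treatment of the decay term coincides with the paper's: Newton's law of cooling with the ambient temperature dropped, integrated over $(t_{n-1},t_n)$, gives $T(t_{n-1})e^{-k(t_n-t_{n-1})}$. For the heating term, however, you take a genuinely different route. The paper does not obtain the divisor $(t_n-t_{n-1})$ by normalizing accumulated per-access energy over the interval; it extracts it from the radiation law itself: the flux from the heat source is $\sigma T_{heat}^4 r^2/x^2$, the absorbed heat is converted to a temperature rise through $dQ = C\, dm\, dT_r$ (constants absorbed into $\gamma = \sigma r^2/(C d \rho)$), and the crucial modeling step is to identify the access interval $(t_n-t_{n-1})$ with the \emph{distance} $x$ from the heat source, so that the $1/x^2$ attenuation multiplied by the exposure time $x$ leaves exactly $\gamma T_{heat}^4/(t_n-t_{n-1})$, after which $s$ accesses are superposed additively. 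That identification is exactly what the paper claims as its advance over constant-$T_r$ models: the interval dependence of the warming is built into the radiation geometry (a recent access is ``close'' to the source, a stale one is ``far''), not introduced as an averaging convention. Your version reaches the same formula, and your remark that a faithful account would be $\sum_{j} \gamma T_{heat}^4 e^{-k(t_n-\tau_j)}$ with per-access decay is a fair criticism the paper silently avoids by damping only $T(t_{n-1})$; but your energy-per-interval normalization is an extra assumption the paper never makes in that form, and a reader of your argument would not recover the paper's actual mechanism (interval-as-distance, inverse-square attenuation, specific-heat bookkeeping). Both derivations are modeling constructions rather than theorems in a strict sense, so neither is more rigorous; they simply motivate the $1/(t_n-t_{n-1})$ factor by different physical analogies.
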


\begin{proof}
	Eq.(2) give Newton's law of cooling.
	\begin{equation}
	d T(t)/d t =-a (T(t) - H)
	\end{equation}
	where $H$ is the ambient temperature, $T(t)$ is the current temperature of the object, and $a$ is the ratio coefficient of the temperature change rate to the temperature difference. By solving the differential equation, we obtain Eq.(3).
	\begin{equation}
	T(t) = (T_0 - H)e^{-kt} + H
	\end{equation}
\indent We use Eq.(3) to model the temperature decay term $T_f(t_n)$ and neglect the effect of ambient temperature. we obtain Eq.(4), $k$ denotes the cooling rate.
\begin{equation}
T_f(t_n) = T(t_{n-1})e^{-k(t_n - t_{n-1})}
\end{equation}

\indent In the thermal radiation law, The heat radiation per unit area is Eq.(5).
\begin{equation}
f(x,T_{heat}) = \sigma \times {T_{heat}}^4r^2/x^2
\end{equation}
Where $T_{heat}$ is the temperature of the heat source, $r$ is the radius of heat source, $x$ is the distance from the heat source, and $\sigma$ is the thermal radiation coefficient, which is a constant in thermodynamics. The heat radiation transfers heat, which is in Eq.(6).  
\begin{equation}
f(x,T_{heat})\times ds\times dt = dQ = C\times dm\times dT_r
\end{equation}
The actual temperature raised $T_r$ by the data is Eq.(7).
\begin{equation}
T_r = f(x,T_{heat})(t_n - t_{n-1})/Cd\rho
\end{equation}
where $C$ is Specific Heat Capacity, which is the heat capacity per unit mass of a material. $d$ is the thickness of the object. $\rho$ is Density, which is mass per unit volume. For calculating data temperature, $C$,$d$ and $\rho$ are constant.

\indent Let $\beta = \sigma r^2/ C d$, and $(t_n - t_{n-1})$ is modeled as the distance $x$ between the data and the heat source.  $T_r$ is Eq.(8)
\begin{equation}
T_r = \beta \times {T_{heat}}^4 / \rho x
\end{equation}
The data temperature model for TS-Cabinet includes both temperature decay and warming. $T(t_n)$ is Eq.(9) and (10), $c$ is a discrete function in Eq.(11)
\begin{equation}
T(t_n) = T_f + T_r \times c
\end{equation} 
\begin{equation}
T(t_n) = T(t_{n-1})e^{-kx} + \beta \times {T_{heat}}^4 \times c / \rho x
\end{equation}

\begin{eqnarray}
c= \begin{cases}
1,\quad  \mathrm{When ~ data ~is~ accessed~ at~ t_n} \\
0,\quad  \mathrm{When ~ data ~is ~not ~accessed ~at~t_n}
\end{cases}
\end{eqnarray}

The $\rho$ of data is considered as a constant here. Let $\gamma = \beta / \rho $, $T(t_n)$ is in Eq.(11)
\begin{equation}
T(t_n) = T(t_{n-1})e^{-kx} + \gamma \times {T_{heat}}^4 \times c /  x
\end{equation}
\begin{equation}
T_i(t_n) = T(t_{n-1})e^{-k(t_n - t_{n-1})} + \gamma \times {T_{heat}}^4 \times s /  (t_n - t_{n-1})
\end{equation}
Eq.(12) represents the change of temperature in $(t_n - t_{n-1})$ when the heat source heats the $data_i$ one time, that is, the $data_i$ is accessed one time. If accessed $s$ times in $(t_n - t_{n-1})$, the temperature of $data_i$ is $T_i(t_n)$.\qed
\end{proof}

We have solved the difficulty when the $T_r$ in Eq.(9) of the existing method is constant, which is only related to the number of access instead of the access interval. We add a time interval to limit the $T_r$, so that the more data accessed in the same period time, the larger their $T_r$ is. Conversely, the fewer data accessed in the same period time, the smaller their $T_r$ is.

For the temperature model, it is difficult to decide how often to update the temperature. If we update a large amount of time-series data temperature all the time, it will be a huge overhead on computing resources. Since the time-series data query is periodic, we can update temperature periodically. Thus we update the data temperature within a time window. The updated data temperature algorithm is shown in Algorithm 1.\\
\begin{algorithm}[t]
	\SetAlgoLined 
	\caption{Data temperature updating algorithm}
	\KwIn{Temperature of the heat source: $T_{heat}$, Time window: $w=t_n-t_{n-1}$, Times of access: $s$}
	\KwOut{Data temperature at $t_n$: $T_i(t_n)$}
	insert data $x_i$\;
	$T_0$ = $T_{heat}$\; 
	$s$ = 0\;
	Maintain~a~tuple~($Timestamp_{query}(t_{n-1})$, $T_i(t_n)$ )~for each time series\;
	\For{i=0;\,i\,\textless \,data.num\,;\,i++}{
		\If{Time interval\textless $w$}{
			\If{ query the Data}{
				$s = s+1$\;
			}
		}
		
		\ElseIf{Time interval= $w$ }{
			$T_i(t_n) = T(t_{n-1})e^{-k(t_n - t_{n-1})} + \gamma {T_{heat}}^4 *s /  (t_n - t_{n-1})$\;
			$s = 0$\;
			update the tuple
			($Timestamp_{query}(t_{n-1})$, $T_i(t_n)$ )\;
		}
	}
	return
\end{algorithm}
When the data $x_i$ is inserted, its initial temperature is set to be the same as the heat source temperature. And we maintain a tuple for $x_i$ to record the timestamp of the last access and the current temperature(Line 4). When the query for $x_i$ arrives, we do not update the temperature of $x_i$ immediately, but make the $s$ to $s$+1(Lines 6-8). We update the temperature records of all data uniformly when the time of our specified time window arrives, and return $s$ to 0(Lines 9-12). The time complexity of Algorithm 1 is $O(n)$.
\begin{table}[htbp]
	\centering
	\setlength{\abovecaptionskip}{0cm} 
	\setlength{\belowcaptionskip}{0cm}
	\caption{Temperature storage structure}\label{tab1}
	\begin{tabular}{|l|l|l|l|l|l|}
		\hline
		\multicolumn{6}{|c|}{Measurement}\\
		\hline
		Tags & Timestamp & Field1 & ... &  \textbf{Last-query Timestamp} & \textbf{Data-Temperature}\\
		\hline
	\end{tabular}
\end{table}

In the implementation, to facilitate the temperature update and to obtain the data temperature in real-time, TS-Cabinet encapsulates the data temperature and adds two new fields after the temporal data fields, namely Last-Query Timestamp and Data-Temperature in Table1. They are used to store the current timestamp and the current temperature of the data being accessed. Query Timestamp type is \textsf{TIMESTAMP}, which occupies 4 bytes of storage space, Data-Temperature type is \textsf{FLOAT}, which occupies 4 bytes, so the increased storage overhead is the amount of temporal data $\times$ 8bytes. In order to reduce the storage overhead, the Last-Query Timestamp and Data-Temperature storage of cold data will not be maintained when the data temperature is too cold.
\section{Workload Forecasting}
Most of the time-series data are accessed in bulk by period time. If we can predict the probability of data being accessed in the future time and catch the temperature rise and fall of data in future periods, we can improve the efficiency of hierarchical storage scheduling of time-series database. To know how the data temperature will change in the future, we develop query arrival rate prediction and frequent timestamp mining algorithm to forecast the workload in TS-Cabinet. In this section, we first introduce query generation to solve the problems in the lack of query historical workload in time-series database (Section 5.1). In order to reduce the model prediction time, we introduce how to extract the query template and take it into cluster after generating the query (Section 5.2). Next, to obtain frequently accessed data, we introduce the forecasting models for the query template cluster (Section 5.3) and the estimation of the frequent access timestamp (Section 5.4).

\subsection{Query Generation}
Since the query logs are difficult to obtain and handle in practice, query generation is a method to replace query logs. However, the current mainstream query generation~\cite{R1,R2} is mainly for relational data, which is difficult to be directly applied to the time series. Therefore, researchers try to use synthetic query sets for benchmarking on time-series databases. Time series database Benchmark (like YCSB-TS~\cite{YCSB}, TSBS~\cite{TSBS}, and TS-Benchmark~\cite{TSB}) are all based on pre-set query templates to generate queries. However, the generated queries can only cover a small share of simple cases and cannot meet the needs of our workload prediction requirements. To address these problems, we propose a template-based workload generation method for time series. We attempt to discover the patterns of query arrival rates in existing benchmarks such as Cycles, Stability, Spike, and Chaos, and construct the templates with such arrival rate patterns.

\textbf{Query Template Construction.}  To construct query template, it is essential to consider enough cases of query workload comprehensively. To generate comprehensive and diverse templates, we divide the time-series data queries into three categories: Conditional Query, Aggregate Query, and Group Query. Based on these three kinds of queries, we design five query templates. We will introduce them with the pollution dataset~\cite{pol} whose attributes are in Table 2 as examples.
\begin{table}
	\centering
	\setlength{\abovecaptionskip}{0cm} 
	\setlength{\belowcaptionskip}{0.2cm}
	\caption{Pollution attributes transform to influxdb's format}\label{tab1}
	\begin{tabular}{|l|l|l|l|l|l|l|l|}
		\hline
		Ozone & particulate matter & $CO_2$ & $SO_2$ & $NO_2$
		& longitude & latitude & timestamp\\
		\hline
		\multicolumn{5}{|c|}{Field} & \multicolumn{2}{|c|}{Tag}  & Timestamp\\
		\hline
	\end{tabular}
\end{table}
\\\indent\textbf {Query Template 1.} Conditional queries on single time-series data.

\begin{example}
	When people want to travel, they execute this query to find out the air quality changes at the destination over a period of time. Experts predict the future air quality of a place by analyzing the change of air conditions in the past.\\
	\fbox{%
		\parbox{1\textwidth}{%
			\textbf{SELECT }\textless \textup{field-key}$\colon$ \textup{ozone}\textgreater $\lbrack$ ,\textless $field-key$\textgreater,\textless $tag-key$\textgreater,...\textup{or}*$\rbrack$ \textbf{ FROM}\,\\\rm pollution 
			\textbf{ WHERE } \textless $tag-key$$\colon$ $sensor-id$\textgreater= \textless $tag-value$$\colon$ $id$\textgreater
			\textbf{ AND } \textup{time} $\geq$ ? 
			\textbf{ AND } \textup{time} $\leq$ ? 
		}%
	}
\end{example}

\textbf {Query Template 2.} Conditional queries on multiple time-series data by time period. 
\begin{example}
If there are multiple sensors at the place, the query will visit the corresponding data of multiple sensors when experts plan to observe the change in air quality in multiple places.\\
	\fbox{%
		\parbox{1\textwidth}{%
			\textbf{SELECT }\textless  \textup{field-key}$\colon$ \textup{ozone}\textgreater  $\lbrack$,\textless $field-key$\textgreater,\textless $tag-key$\textgreater,...\textup{or}* $\rbrack$\textbf{ FROM } \,\\\rm pollution
			\textbf{ WHERE } \textless$tag-key$ $\colon$ $sensor-id$\textgreater=\textless$tag-value$$\colon$$id$\textgreater \textbf{ IN }$\lbrace$ \textless $tag-id1$\textgreater,… $\rbrace$
			\textbf{ AND }\textup{time} $\geq$ ? 
			\textbf{ AND }\textup{time} $\leq$ ? 
		}%
	}
	
\end{example}
\textbf {Query Template 3.} Conditional queries for time-series data by period time and certain operators. 
\begin{example}
	When an air pollutant indicator exceeds or falls below a threshold value, the query needs to be executed to determine the specific time period and to conduct further analysis and warning. Operator can be \textgreater, $\geq, \textless, \leq and =.$\\
	\fbox{%
		\centering
		\parbox{1\textwidth}{%
			
			\textbf{SELECT }\textless  \textup{field-key}$\colon$ \textup{ozone}\textgreater $\lbrack$,\textless $field-key$\textgreater,\textless $tag-key$\textgreater,...\textup{or}* $\rbrack$\textbf{ FROM } \,\rm \\pollution
			\textbf{ WHERE } \textless $tag-key$ $\colon$ $sensor-id$\textgreater= \textless $tag-value$ $\colon$ $id$\textgreater
			\textbf{ AND }\textup{time} $\geq$ ? 
			\textbf{ AND }\textup{time} $\leq$ ? 
			\textbf{ AND } \textless $field-key$$\colon$ ? \textgreater \textbf{ OPER} ? 
			
		}%
	}
	
\end{example}

\textbf {Query Template 4.} Aggregate queries for time-series data by period time.\
\begin{example}
 People can know the approximate air pollution situation of a place over a period time where the Operator can be AVG, MAX, MIN.\\
	\fbox{%
		\centering
		\parbox{1\textwidth}{%
			
			\textbf{SELECT OPER? }(\textless  \textup{field-key}$\colon$ \textup{ozone}\textgreater  $\lbrack$,\textless $field-key$\textgreater,\textless $tag-key$\textgreater,...\textup{or}* $\rbrack$)\\\textbf{FROM } \,\rm pollution
			\textbf{ WHERE } \textless $tag-key$ $\colon$ $sensor-id$\textgreater= \textless $tag-value$ $\colon$ $id$\textgreater
			\textbf{ AND } \\\textup{time} $\geq$ ? 
			\textbf{ AND } \textup{time} $\leq$ ? 
		}%
	}
\end{example}

\textbf {Query Template 5.} Group queries for time-series data by period time.
\begin{example}
	People compare the air quality of different places to decide their final destination before traveling.\\
	\fbox{%
		\centering
		\parbox{1\textwidth}{%
			
			\textbf{SELECT OPER?}(\textless \textup{field-key}$\colon$ \textup{ozone}\textgreater$\lbrack$,\textless $field-key$\textgreater,\textless $tag-key$\textgreater,...\textup{or}* $\rbrack$)\\\textbf{FROM } \,\rm pollution
			\textbf{ WHERE } \textless $tag-key$ $\colon$ $sensor-id$\textgreater= \textless $tag-value$ $\colon$ $id$\textgreater \textbf{ IN } \\$\lbrace$ \textless $tag-id1$\textgreater,\textless $tag-id2$\textgreater,… $\rbrace$
			\textbf{ AND } \textup{time} $\geq$ ? 
			\textbf{ GroupBy } \textup{Clause}
		}%
	}
\end{example}

\textbf{Arrival Pattern.} To simulate real-world scenarios, we describe four common workload patterns that are popular temporal database applications.

\textbf{Cycles:} The number of queries varies periodically. For example, during the day, 7-9 o'clock and 18-21 o'clock may be the peak time of the query. For the same query template, there may be multiple different cycles.

\textbf{Stability:} The number of queries is stable and doesn't change over time. When experts want to analyze sensor data, they usually execute queries at specific moments (i.e. every half hour). The arrival rate of queries is stabilized.

\textbf{Spike:} The number of queries shows explosive growth and decay. When the sensor is abnormal, for timely detection and repair of abnormalities may lead to a large number of queries in a short period time.

\textbf{Chaos:} Queries are generated randomly and the arrival rate is irregular.

When query templates and arrival patterns are available, query generation can be performed. Query generation is divided into four main steps: first of all, constructing a summary of query templates for the dataset, then matching the arrival rate pattern for each template, determining the number of queries generated by a template over a period of time based on the arrival rate pattern, and finally populating the templates to generate a collection of queries. The time complexity of Algorithm 2 is $O(n)$. Query generation can  build query templates automatically, and  determine the corresponding query arrival rate pattern for each template automatically. It can also generate high-quality and diversified queries for each template based on the reach pattern.\\
\begin{algorithm}[H]
	\SetAlgoLined 
	\caption{Query Generation}
	\KwIn{Dataset, Arrival pattern}
	\KwOut{query set}
	Build Template Summary for the dataset\; 
	\For{i=0;\,i\,\textless \,template.num\,;\,i++}{
		match (template.i,random(Arrival Pattern))\;
	}
	return
\end{algorithm}
\subsection{Query Template Extraction and Clustering}
Since the large number of queries will consume too much time in training models, we designed the query template extraction method and the query template clustering method to accelerate the prediction.

\textbf{Query Template Extraction.} We extract the template for each input query with the following two steps. First, we record the timestamp of the query, $record=(Now()-time_{end}, Now()-time_{start})$. Then we replace the timestamp information of the query with a placeholder. Since we only need to predict the frequently accessed data by the workload and do not care further operations on the data such as aggregation, we remove such operations. An example of query template extraction is shown in Figure 3, where the timestamp is replaced with a placeholder, and the AVG operation in the query is removed.

\begin{figure}
	\centering
	\setlength{\abovecaptionskip}{0cm} 
	\setlength{\belowcaptionskip}{0cm}
	\includegraphics[width=0.9\textwidth]{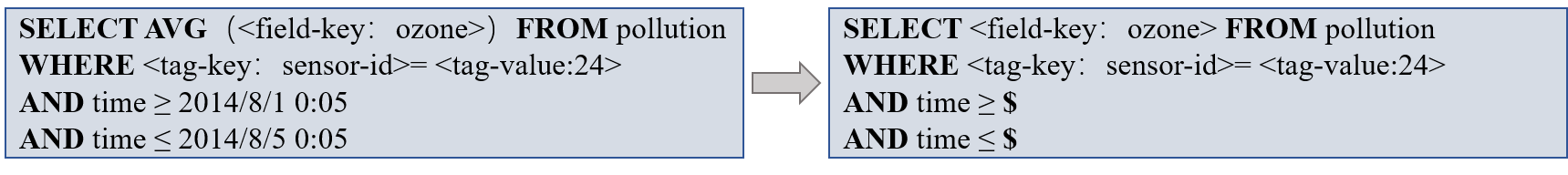}
	\caption{Query Template Extraction.} \label{fig2}
\end{figure}

\textbf{Query Template Clustering.} TS-Cabinet uses a QB5000-based clustering method~\cite{QB} for clustering query templates. The original DBSCAN algorithm uses the minimum distance between an object and any object in a cluster. However, it has uncertainty. To avoid randomness in the clustering process, we specify the distance between object and cluster center as the evaluation criterion, which represents the template of cluster. The DTW distance is used as a similarity measure between the historical arrival rates of two templates. Since it can solve complex time series prediction problems~\cite{DTW}. Only if the similarity of the query templates within each cluster is high, the prediction arrival rate can be more accurate when deploying a prediction model for them. For this purpose, we add a threshold $\rho$ to the similarity metric, and only when the DTW distance between the new query templates and center of the cluster gets less than $\rho$, the new query templates will belong to this cluster.

Query Template Clustering Algorithm is shown in Algorithm 3. For each new template, the distance between its historical arrival rate and the center of any cluster is first calculated, and the template is assigned to the cluster with the smallest distance and less than $\rho$. We use a kd-tree to quickly find the closest center of existing clusters to the template in a high-dimensional space~\cite{KD}. If there are no existing clusters (this is the first query), or if no cluster has a center close enough to the template, TS-Cabinet will create a new cluster with the template as the only member(Line 1-10). TS-Cabinet periodically checks the similarity of the previous templates to the centers of the clusters that they belong to. If the similarity of a template is no longer less than $\rho$, TS-Cabinet removes it from the current cluster and repeats Lines 1-10 to find a new cluster(Lines 11-13). If the cluster does not receive a  template for a long period time, TS-Cabinet will delete the cluster(Lines 14-15). The complexity of these steps is $O(nlogn)$, where $n$ is the number of templates in the workload.
\begin{algorithm}[t]
	\SetAlgoLined 
	\caption{Query Template Clustering Algorithm}
	\KwIn{$templates$ is the new templates set, threshold $\rho$ }
	\KwOut{$clusters = \left\{c_1,c_2,...,c_m\right\}$ is the cluster set}
	$templates$ marked as unprocessed \; 
	\For{i=0;~i~\textless \,$templates.num$\,;\,i++}{
		\If{$template_i$ \,is \,the\, first\, query}{
			Create $c_m$ for $template_i$\;
		}
		\Else{
			Calculate the $DTW$ between $template_i$ and each center of $clusters$\;
			
			\If{$DTW_i$ between $template_i$ and $center_m$ of $clusters$ is minimum and $DTW_i$\textless $\rho$ }{
				$c_m$ $\gets$ $template_i$\;
			}
			\Else{
				Create $c_m$ for $template_i$\;
			}
		}
		\If{the $DTW$ between $template_i$ and $center_m$ $\geq$ $\rho$}{
			Drop $template_i$ from $c_m$ \;
			Repeat Line3-14 for the $template_i$ \;
		}	
		\If{ $C_m$ does not receive a template for a long time}{
			Drop the $c_m$ from $clusters$\;
		}	
	}
	return
\end{algorithm}
\subsection{Forecasting Models}
To design the prediction model for query arrival rate, we analyze and compare the four most widely used prediction models including Linear, ARIMA, Holt winter's and LSTM. We found the Linear and LSTM outperformed the other models. First, Linear can avoid overfitting well when dealing with short-term prediction because of the intrinsic relationship of data is simple, so Linear short-term prediction is very effective. As for long-term prediction, Long Short Term Memory (LSTM) as a variant of RNN enables the network to automatically learn the periodicity and repetitive trends of data points in time series, which is more suitable for predicting nonlinear scenarios compared to traditional RNN. \\
\indent Then how to combine the advantages of multiple models? In other application fields, the effective method of researchers is to use ensemble model, which combines multiple models for average prediction. In prediction tasks, ensemble methods combine multiple machine learning techniques into a single prediction model to reduce variance or bias (e.g., Boosting)\cite{DW}. Previous work has shown ensemble methods work well and they are often the top winners in data science competitions~\cite{RP,ZH}. To combine the advantages of LR and LSTM, TS-Cabinet uses an ensemble learning model with a combined Linear and LSTM. The Linear model is shown in Eq.(14), where $a, b_1, b_2, ... , b_n$  is the parameter, called the regression parameter. The LSTM model is shown in Eq.(15-20). 
\begin{equation}
y = a + b_1x_1 + b_2x_2,...,b_n x_n + \varepsilon
\end{equation}
\begin{equation}
f_t = \sigma (W_{fx}x_t + W_{fh}h_{t-1} + b_f)
\end{equation}
\begin{equation}
i_t = \sigma (W_{ix}x_t + W_{ih}h_{t-1} + b_i)
\end{equation}
\begin{equation}
g_t = tanh (W_{gx}x_t + W_{gh}h_{t-1} + b_g)
\end{equation}
\begin{equation}
O_t = \sigma (W_{ox}x_t + W_{oh}h_{t-1} + b_o)
\end{equation}
\begin{equation}
C_t = g_t\times i_t + C_{t-1} \times f_t
\end{equation}
\begin{equation}
h_t =  tanh (C_t) \times O_t
\end{equation}
The ensemble model combines the prediction results of the Linear and LSTM models through the weight coefficient. The weight coefficient is calculated by the minimum sum of squared errors, and the model with higher prediction accuracy is given greater weight. The weights are calculated as shown in Eq.(21).
\begin{equation}
W_i =  E_i/ \sum_{i=1}^{n}E_i
\end{equation}
Where $W_i$ is the weight of $Model_i$. $E_i$ is sum of squared errors of $Model_i$. $n$ is the type of prediction model, here is Linear and LSTM.  
\subsection{Frequent Access Timestamp Estimation}
After the query load arrival rate is predicted, we know the time series and related field which will be accessed next, because the query in the temporal data is more concerned about the data in a period time, i.e., the $record$ records a period time. Then the timestamp range of these data will be accessed frequently? The core lies in how to count the frequency of the accessed timestamp ranges in the query log. Here, we cut the period time into time points, so the problem is transformed into the problem to count the frequency of accessing the timestamp in the query log, which is a classic problem as follows~\cite{MG}.

\textbf{Problem Definition} \quad Given timestamp data streams $\sigma = \left\{a_1,a_2,...,a_m\right\}$, where $m$ is the size of the data stream, $a_i \in \left\{1,2,...,n\right\}$, The number of occurrences of each element is $f = (f_1,f_2,...,f_n)$, where $f_i$ is the number of occurrences of the $i$ element. Easily derived $m = \sum_{i=1}^{n}f_i$, given the parameter $k$, if we find all the elements with more occurrences than $m/k$, it is the output set $\left\{j|f_j>m/k\right\}$.

\indent We consider that if we maintain a counter for each element, we need to have $n$ counters that can be computed in $O(n)$ time. However, we do not have enough memory to store the whole data stream. Therefore, we approximate the problem based on the Misra-Gries algorithm with an error rate of $\varepsilon$.

Frequent Access Timestamp Mining Algorithm is shown in Algorithm 4. First, we create an array $T$ of size $k$. For $a_i$ arriving in the data stream in sequence, process it as follows: if $a_i$ is in the $T$, plus one to its counter(Lines 2-4); if $a_i$ is not in the $T$ and the number of elements in the $T$ is less than $k-1$, add the $a_i$ to the $T$ and assign a counter of 1 to it; in other cases, subtract 1 to the counters of all elements in $T$(Lines 5-9). If the counter value of an element in $T$ is 0, the element is removed from the $T$(Lines 10-11). When the scan of the data stream is completed, the $k$ elements saved in $T$ are the frequent items in the data stream. The time complexity is $O(m\times n)$. The space complexity is $O(\varepsilon^{-1}logmn)$. For any $query_i$, $\widehat{f_i}$ that is satisfied $f_i-\varepsilon n\leq \widehat{f_i} \leq f_i$ is returned. Therefore, the approximation ratio bound of Algorithm 4 is $\varepsilon$. After obtaining the frequent items in the timestamped data stream, combined with the query arrival rate prediction result, we can preheat dataset within the preheat-size constraints.
\begin{algorithm}[t]
	\SetAlgoLined 
	\caption{Frequent Access Timestamp Mining Algorithm}
	\KwIn{An array $T$ consists of ($x_i$,$C_{x_i}$), $k$ is~ size~of~$T$~, $\sigma$ = $\left\{a_1,a_2,...,a_m\right\}$}
	\KwOut{array $T$}
	$T$ $\gets$ $\emptyset$ \; 
	\For{i=0;~i~\textless ~m~;~i++}{
		
		\If{$a_i$ $\in$ $T$}{
			$(a_i,C_{a_i})$ $\gets$ $(a_i,C_{a_i}+1)$\;
		}
		\ElseIf{$a_i$ $\notin$ $T$}{
			\If{T.size\,\textless\,k\,-\,1}{
				$T$ $\gets$ $(a_i,1)$
			}
			\For{i=0;~i~\textless ~T.size~;~i++}{
				$(a_i,C_{a_i})$ $\gets$ $(a_i,C_{a_i}-1)$\;
			}
		}
		\If{$C_{a_i}=0$}{
			drop $(a_i,C_{a_i})$\;
		}
	}
	return	
\end{algorithm}
\section{Evaluation}
\subsection{Experimental Settings}
To evaluate the performance of TS-Cabinet, we implemented a hot and cold data hierarchical storage scheduler at the cloud-edge-end time-series database, which consists of an end device (windows system Intel(R) Core(TM) i7-8565U CPU @ 1.80GHz 1.99 GHz), an edge server (Ubuntu 18.04.6 LTS AMD(R) RyZen 5 3600x 6-core processor x 12), and a cloud server consisting of InfluxDB 1.1.1~\cite{influxdb} on them. We performed experiments using the pollution public dataset~\cite{pol}, which is a collection of air pollution measurements generated based on the air quality metric (449 observations) measuring data from August 2014 - October 2014. It has 200,000 data points, and the dataset size is 313 MB.

\subsection{Validity of the Temperature Model}

In order to verify the validity of the data temperature model of TS-Cabinet, we compared it with TITLE, the Newton's law of cooling combined with the temperature where the $T_r$ is a constant in \cite{xie}. The experimental results are shown in Figure 3 show that: (1) When $A$ and $B$ are accessed only once in the same time, such as period time 1-3, the later the access, the higher the temperature is, so $T_{B_3} \textgreater T_{A_3}$. (2) When $A$ and $B$ are accessed twice in the same time, such as the 1-4 period time, the smaller the access interval, the higher the temperature of the data. $T_{B_4} \textgreater T_{A_4}$. (3) When the access number is high, the data temperature is not necessarily high. When there are many visits, the access time is too early, it will be lower than the data temperature when there are few visits and the access time is late. Such as period time 4-11, $T_{B_{11}} \textless T_{A_{11}}$.

The above three observations are also satisfied TITLE. However, our model portrays the relationship between the warming magnitude and the access interval. The smaller the interval between two accesses is, the faster the warming magnitude is. It has the advantage of being able to reduce data transfer without migration to the cloud side and the edge side for cold data whose workload predicts that the interval between two accesses is too long.

\begin{figure}[htbp]
	\centering
	\setlength{\abovecaptionskip}{0cm} 
	\setlength{\belowcaptionskip}{0cm}
	\includegraphics[width=0.6\textwidth]{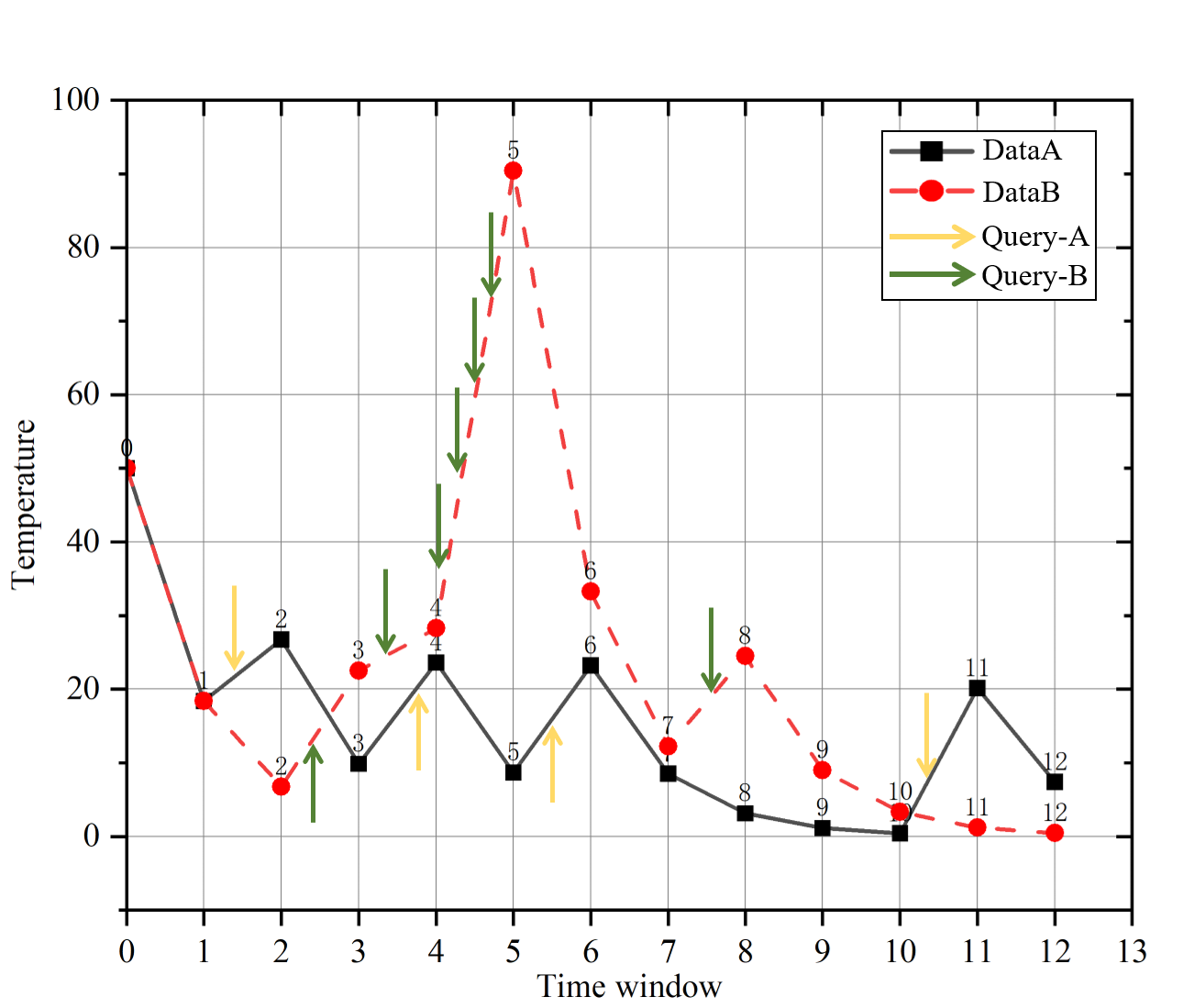}
	\caption{TS-Cabinet data temperature.} \label{fig1}
\end{figure}


\subsection{Workload Forecasting Accuracy Evaluation}
To verify the accuracy of the TS-Cabinet workload prediction model, we use Mean Absolute Error (MSE), Mean Square Error (MAE), and Root Mean Square Error (RMSE) as metrics. Moreover, TS-Cabinet is compared with the four most widely used prediction models, namely Linear, ARIMA, Holt winter's and LSTM. We use 5 mins as the prediction interval and record the performance of each model in eight prediction horizons. In addition, how far into the future a model can predict is called its prediction horizons. The experimental results are shown in Figure 4-7. Through Figures 4, 5, and 6, it can be seen that TS-Cabinet outperforms these four models in accuracy within a prediction horizon of 12 hours. After 12 hours, the accuracy of TS-Cabinet is higher than Holt winter's. In terms of training time, since TS-Cabinet is an ensemble model, the training time is relatively long, but it is almost equal to LSTM model.

\begin{figure}[htbp]
	\centering
	\setlength{\abovecaptionskip}{0cm} 
	\setlength{\belowcaptionskip}{0cm}
	\includegraphics[width=0.9\textwidth]{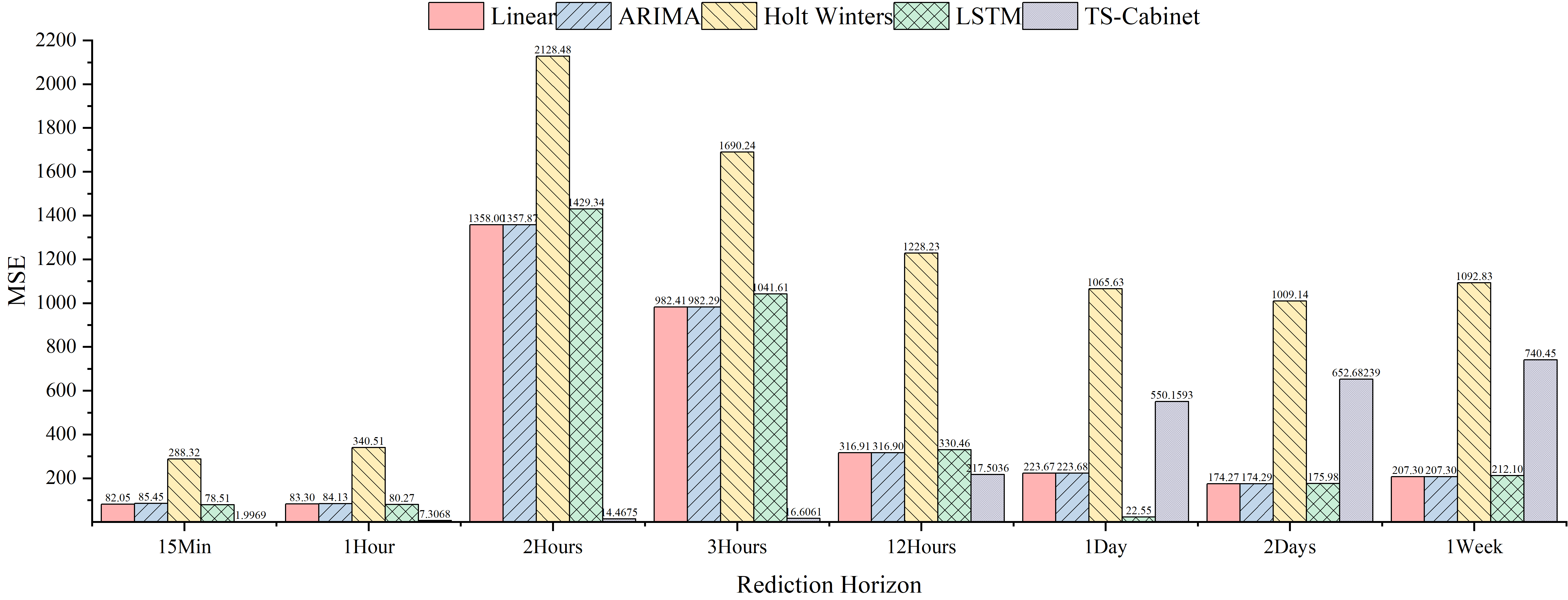}
	\caption{The MSE of the different forecasting models.} \label{fig1}
\end{figure}
\begin{figure}[htbp]
	\centering
	\setlength{\abovecaptionskip}{0cm} 
	\setlength{\belowcaptionskip}{0cm}
	\includegraphics[width=0.9\textwidth]{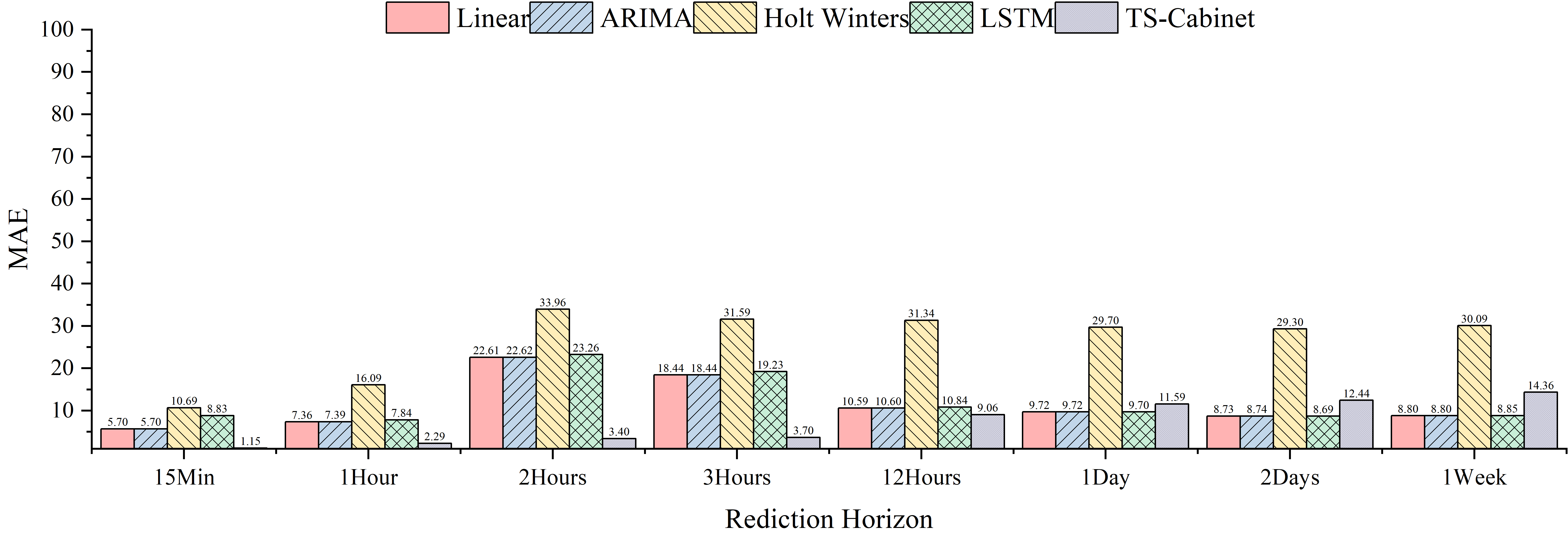}
	\caption{The MAE of the different forecasting models.} \label{fig1}
\end{figure}

\begin{figure}[htbp]
	\centering
	\setlength{\abovecaptionskip}{0cm} 
	\setlength{\belowcaptionskip}{0cm}
	\includegraphics[width=0.9\textwidth]{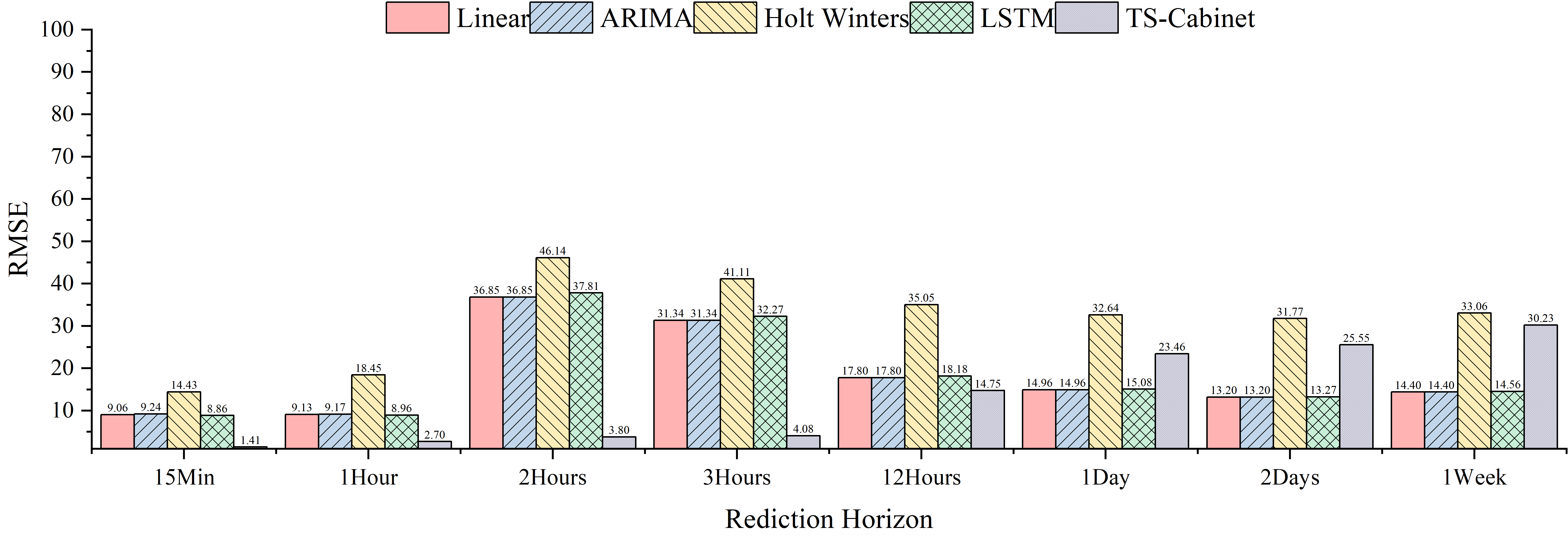}
	\caption{The RMSE of the different forecasting models.} \label{fig1}
\end{figure}

\begin{figure}[htbp]
	\centering
	\setlength{\abovecaptionskip}{0cm} 
	\setlength{\belowcaptionskip}{0cm}
	\includegraphics[width=0.6\textwidth]{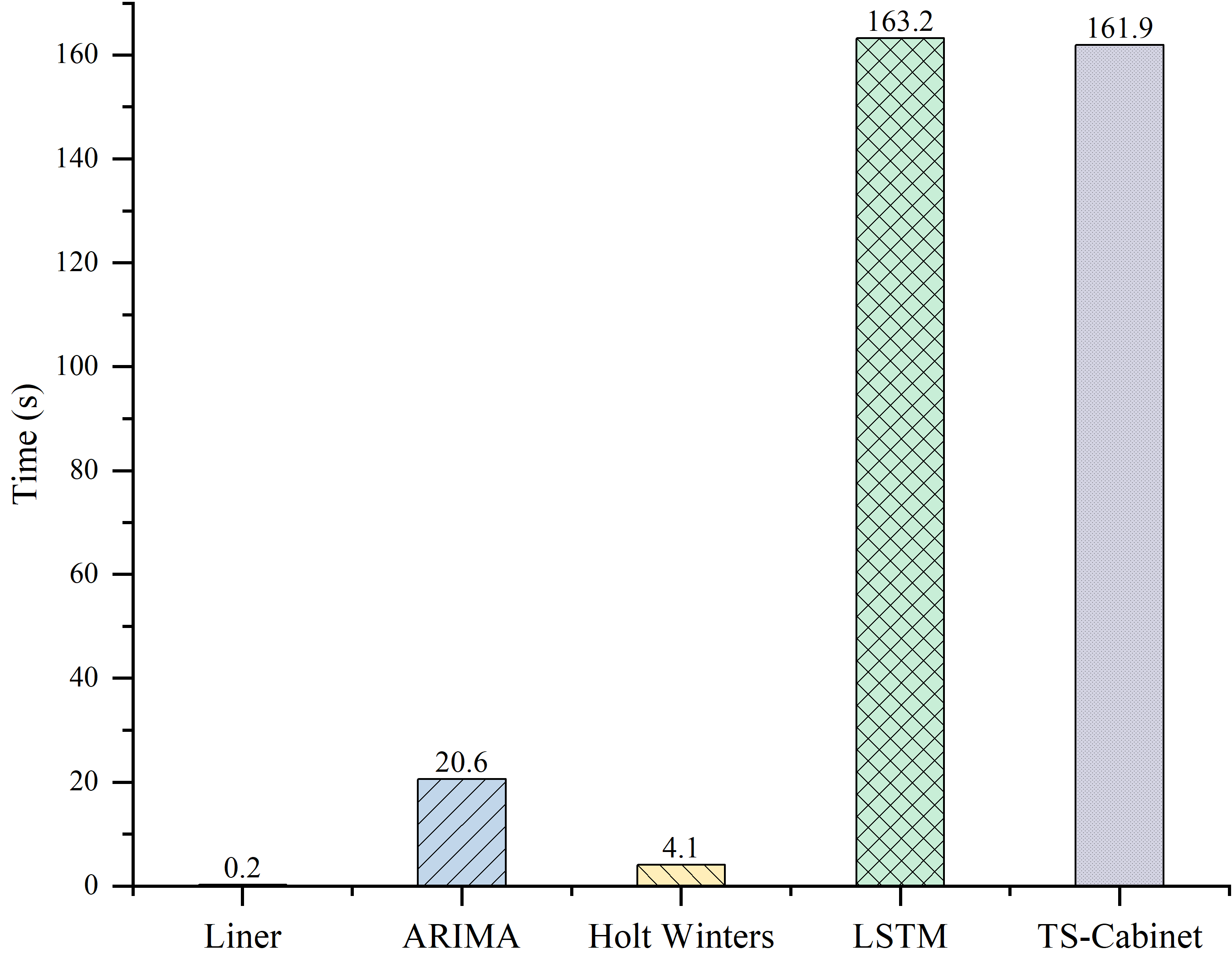}
	\caption{The training time of the different forecasting models.} \label{fig1}
\end{figure}


\subsection{Access Hit Rate Evaluation}
To verify TS-Cabinet how to improve the data access hit rate for cloud-side and edge-side, we compared it to TITLE\cite{xie}. Figure 8 shows the performance of the TITLE and TS-Cabinet with changeable value of hot database size. It is clear to find that with the increment of cache size, the hit ratio of all three models improved gradually. From start to end, TS-Cabinet with workload forecasting shows the highest cache hit ratio. Such a result confirms that the cold-hot data recognition and scheduling can benefit from workload forecasting, as workload forecaster can assist us to identify and preheat data points which seem to be visited frequently shortly. The rows of data with higher temperature are more likely to be loaded and kept at cloud-side and edge-side. It is worth to mention that TS-Cabinet without workload forecasting module achieves a much better performance than the TITLE. We owe it to the successful usage of our data temperature model, which means that leveraging the law of thermal radiation to model the access heating of data is more suitable for industrial IoT scenarios.

Through the above experiments, TS-Cabinet can achieve about 94\% hit rate for data access on the cloud side and edge side, which is 12\% better than
the existing methods. It shows that our proposed temperature model and workload prediction model can reduce the storage overhead of cloud-edge-end time-series database and improve the efficiency of collaborative queries.

\begin{figure}[htbp]
	\centering
	\setlength{\abovecaptionskip}{0cm} 
	\setlength{\belowcaptionskip}{0cm}
	\includegraphics[width=0.6\textwidth]{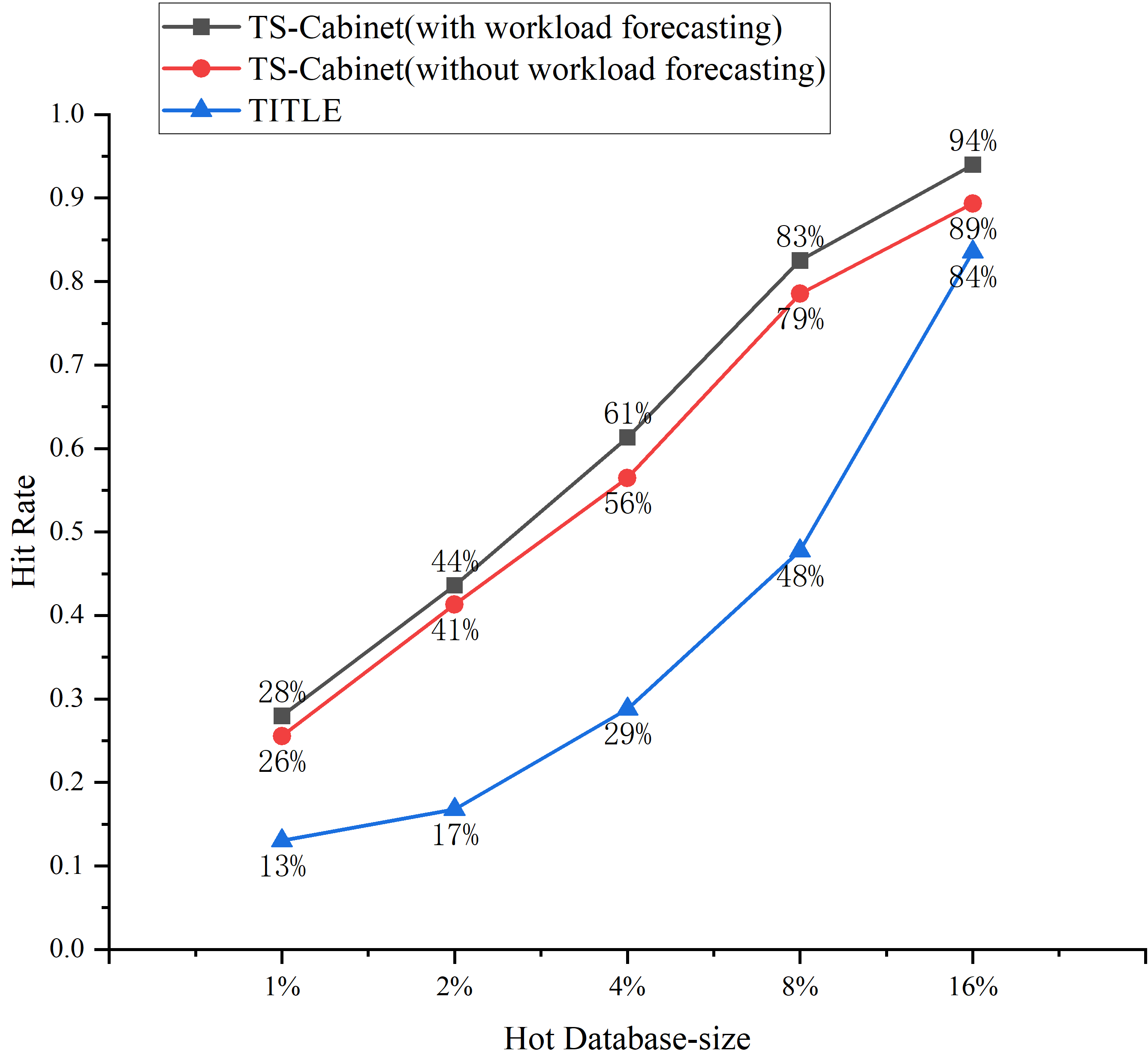}
	\caption{Access hit rate comparison.} \label{fig1}
\end{figure}

\section{Conclusion}

We propose a hierarchical storage strategy for cloud-edge-end time-series database. TS-Cabinet calculates the current temperature of data and uses workload forecasting and frequent access timestamp methods to calculate future temperature changes. It migrates data based on them. The experimental results show that TS-Cabinet can effectively help data in the cloud-edge-end time-series database to select optimal storage location. TS-Cabinet not only reduces storage overhead, but also improves the query efficiency. We plan to consider designing adaptive workload forecasting models to help query templates with different query arrival rate patterns to match the best workload forecasting models for future work.\\

\noindent\textbf{Acknowledgements}\\
\indent This paper was supported by NSFC grant  (62232005, U1866602).

%
%

%
%
%
%

\label{}






\end{document}